\newtheorem{definition}{Definition}
\newtheorem{example}{Example}
\newtheorem{proposition}{Proposition}
\newtheorem{theorem}{Theorem}
\newtheorem{lemma}{Lemma}
\newenvironment{proof}{\paragraph{Proof:}}{\hfill$\square$}
\newcommand{\VS}[1]{} 	
\newcommand{\VL}[1]{#1} 	
\newcommand{\cim}[1]{}
\title{Causal Dynamics of Discrete Surfaces}
\author{Pablo Arrighi
\institute{Universit\'e de Grenoble, LIG, 220 rue de la chimie, 38400 Saint-Martin-d'H\`eres, France}
\institute{Universit\'e de Lyon, LIP, 46 all\'ee d'Italie, 69008 Lyon, France}
\thanks{This work was supported by the French National Research Agency, ANR-10-JCJC-0208 CausaQ grant}
\email{parrighi@imag.fr}
\and
Simon Martiel
\institute{Universit\'e Nice-Sophia Antipolis, I3S, 2000 routes des Lucioles, 06900 Sophia Antipolis, France}
\thanks{This work was supported by the John Templeton Foundation, grant ID 15619}
\thanks{This work was supported by the French National Research Agency, ANR-10-JCJC-0208 CausaQ grant}
\email{martiel@i3s.unice.fr}
\and
Zizhu Wang
\institute{Centre for Quantum Information and Communication, Ecole Polytechnique de Bruxelles}
\institute{Universit\'e Libre de Bruxelles, 50 av. F.D. Roosevelt - CP165/59, 1050 Bruxelles, Belgium}
\email{zizhu.wang@ulb.ac.be}
}
\begin{document}
\maketitle

\begin{abstract}
We formalize the intuitive idea of a labelled discrete surface which evolves in time, subject to two natural constraints: the evolution does not propagate information too fast; and it acts everywhere the same. 
\end{abstract}

\section{Introduction}

Various generalizations of cellular automata, such as stochastics \cite{ArrighiSCA}, asynchronous \cite{manzoni2012asynchronous} or non-uniform cellular automata \cite{FormentiNONUNI}, have already been studied. In \cite{ArrighiCGD,ArrighiIC,ArrighiCayley,ArrighiCayleyNesme} the authors, together with Dowek and Nesme, generalize Cellular Automata theory to arbitrary, time-varying graphs. I.e. they formalize the intuitive idea of a labelled graph which evolves in time, subject to two natural constraints: the evolution does not propagate information too fast; and it acts everywhere the same. Some fundamental facts of Cellular Automata theory carry through, for instance these "causal graph dynamics" admit a characterization as continuous functions. 

The motivation for developing these Causal Graph Dynamics (CGD) was to ``free Cellular Automata off the grid'', so as to be able to model any situation where agents interact with their neighbours synchronously, leading to a global dynamics in which the states of the agents can change, but also their topology, i.e. the notion of who is next to whom. In \cite{ArrighiCGD,ArrighiIC,ArrighiCayley} two examples of such situations are mentioned. The first example is that of a mobile phone network: mobile phones are modelled as vertices of the graph, in which they appear connected if one of them has the other as a contact. The second example is that of particles lying on a smooth surface and interacting with one another, but whose distribution influences the topology the smooth surface (cf. Heat diffusion in a dilating material, or even discretized General Relativity \cite{Sorkin}). CGD seems quite appropriate for modelling the first situation (or at least a stochastic version of it). 

Modelling the second situation, however, is not a short-term perspective. One of the several difficulties we face is that having freed Cellular Automata off the grid, we can no longer interpret our graphs as surfaces, in general. There are, however, a number of formalisms for describing discretized surfaces which are very close to graphs (Abstract simplicial complexes, CW-complexes\ldots \cite{Hatcher}). These work by gluing triangles alongside so as to approximate any smooth surface. Relying upon these formalisms, can we formalize the idea of a labelled discrete surface which evolves in time --- again subject to the constraints that evolution does not propagate information too fast and acts everywhere the same? Can we achieve this by just modelling each triangle as a vertex, and each gluing of two triangles as an edge, and then evolve the graph according to a CGD?

Notice that one could argue that simplicial complexes are not the simplest objects one could use to represent surfaces: planar graphs may seem more natural to some. Our choice is motivated by two reasons. First, the notion of planar graphs can only be used to represent two-dimensional surfaces, and would be limiting when generalizing to higher dimensions (see further work). Second, in a planar graph, the degree of each vertex is not bounded, and thus such graphs would not fit in our model. In order to change this we would have needed to artificially bound this degree by some constant $d$ and lose the generality of planar graphs.

This paper tackles the question of how to give a rigorous definition of ``Causal Dynamics of Simplicial Complexes'', focussing on the 2D case for now. It investigates whether CGD can be readily adapted for this purpose, i.e. whether CGD can be ``tied up again to discrete 2D surfaces''. It will turn out that this can be done at the cost of two additional restrictions, i.e. a CGD must be rotation-commuting and bounded-star preserving in order to be a valid Causal Dynamics of Simplicial Complexes. The first restriction allows us to freely rotate triangles. The second requirement allows us to map geometrical distances into graph distances.  Both restrictions are decidable. This way of modelling simplicial complexes is similar to combinatorial maps defined in~\cite{lienhardt}.

\section{Complexes as graphs}\label{sec:complexesasgraphs}

{\bf Correspondence.} Our aim is to define a Cellular Automata-like model of computation over {\em $2D$ simplicial complexes}. For this purpose, it helps to have a more combinatorial representation of these complexes, as graphs. The straightforward way is to map each triangle to a vertex, and each facet of the triangle to an edge. The problem, then, is that we can no longer tell one facet from another, which leads to ambiguities (see Fig. \ref{fig:correspondance} {\em Top row.}). 

A first solution is to consider {\em $2D$ coloured simplicial complexes} instead. In these complexes, each of the three facets of a triangle has a different colour amongst $\{a,b,c\}$. Now each triangle is again mapped to a vertex, and each facet of the triangle to an edge, but this edge holds the colours of the facets it connects at its ends (see Fig. \ref{fig:correspondance} {\em Bottom row.}). We recover \cite{ArrighiCGD,ArrighiIC,ArrighiCayley} the following definition.

\begin{figure}\begin{center}
\includegraphics[scale=0.5]{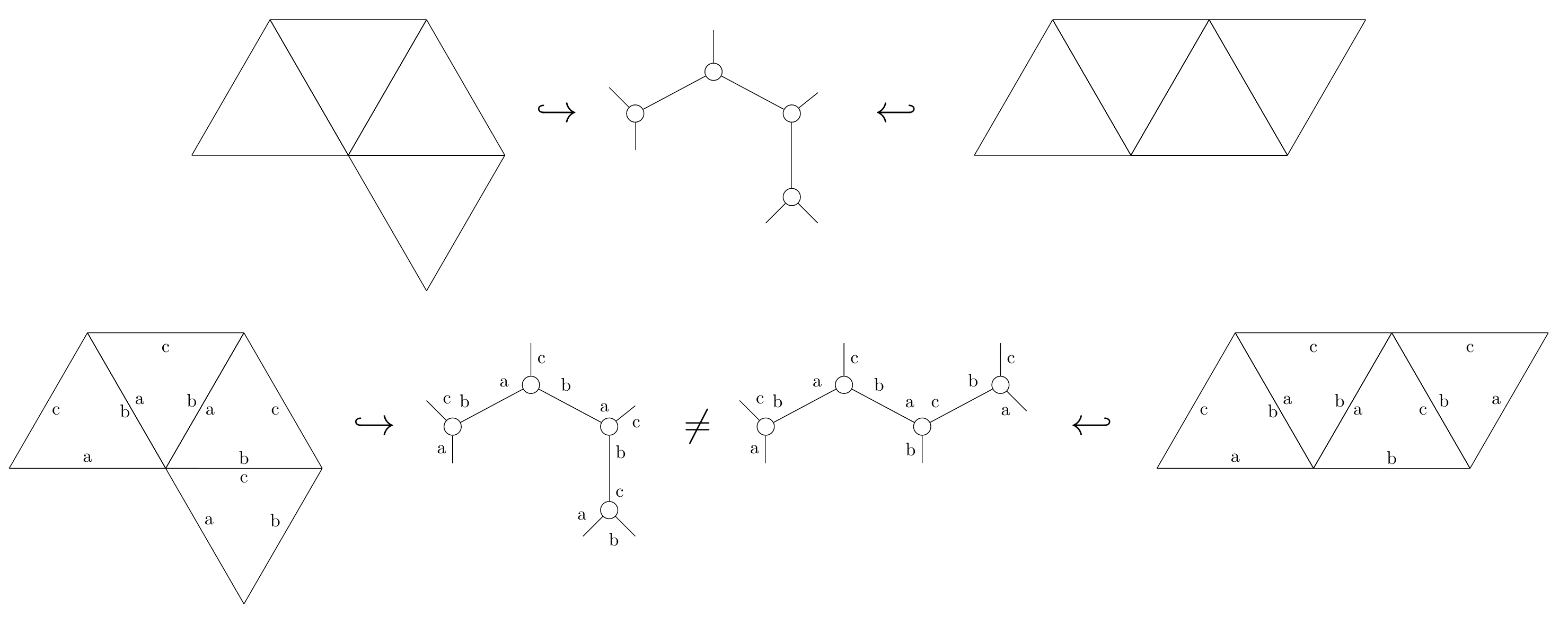}
\caption{Complexes as graphs. \label{fig:correspondance}
{\em Top row}. The straightforward way to encode complexes as graphs is ambiguous.
{\em Bottom row}. Encoding coloured complexes instead lifts the ambiguity. However, the fact that the extreme triangles share one point or not, is less obvious in the graph representation.}
\end{center}\end{figure}

\begin{definition}[Graph]\label{def:graphs}
A labeled {\em graph} $G$ is given by 
\begin{itemize}
\item[$\bullet$] A (at most countable) subset $V(G)$ of $V$, whose elements are called {\em vertices} and where $V$ is the uncountable set of all possible vertex names.
\item[$\bullet$] A finite set $\pi=\{a,b,c\}$, whose elements are called {\em ports}.
\item[$\bullet$] A set $E(G)$ of non-intersecting two element subsets of $V(G):\pi$, whose elements are called edges. Symbol $:$ stands for the cartesian product. An edge $\{u:p,v:q\}$ is to be read ``There is an edge linking port $p$ of vertex $u$ and port $q$ of vertex $v$''.
\item[$\bullet$] A function $\sigma: V(G) \rightarrow \Sigma$ associating to each vertex $v$ some label $\sigma(v)$ in a finite set $\Sigma$.
\end{itemize}
The set of labeled graphs with labels in $\Sigma$ is denoted $\mathcal{G}_{\pi,\Sigma}$ and the set of disks of radius $r$ is denoted $\mathcal{D}^r_{\pi,\Sigma}$.
We similarly define the set of (unlabelled) graphs and denote it $\mathcal{G}_{\pi}$. 
\end{definition}

The following provides a formal interpretation of those graphs into $CW$-complexes.
\begin{definition}[Interpretation]\label{def:interpretation}
Given a graph $G$, its interpretation as a $CW$-complex $K(G)$ is such that: 
\begin{itemize}
\item[$\bullet$] its set of triangles $K_2$ is $V(G)$.
\item[$\bullet$] its set of segments $K_1$ is the quotient of $V(G):\pi$ with respect to the equivalence: $u:p\equiv_1 v:q$ if and only if $\{u:p,u:q\}\in E(G)$. Elements of $K_1$ are denoted $u:\overline{p}$, to distinguish them from the following:
\item[$\bullet$] its set of points $K_0$ is the quotient of $V(G):\pi$ with respect to the equivalence: $u:p\equiv_0 v:q$ if and only if $\{u:(p+1),v:(q-1)\}\in E(G)$.
\end{itemize}
A segment $u:\overline{p}$ has points $\{ u:(p+o) \textrm{ modulo }\equiv_0 \,|\; o\in\{1,2\}\,\}$.\\
A triangle $u$ has segments $\{ u:\overline{p} \textrm{ modulo }\equiv_1\,|\; p\in\pi\,\}$.
Notice that segments $u:\overline{p}$ and $u:\overline{q}$ have common point $u:\overline{p}\cap\overline{q}$.  
\end{definition}

\begin{figure}[h]\begin{center}
\includegraphics[scale=1.0]{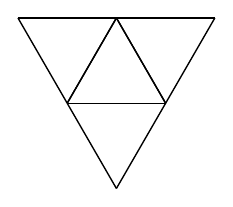}\includegraphics[scale=1.0]{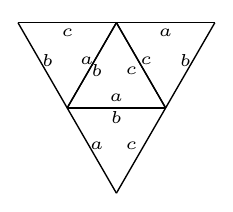}\includegraphics[scale=1.0]{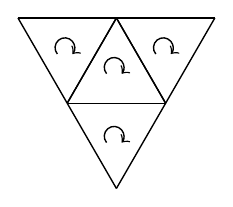}
\caption{Complexes, Coloured complexes, Oriented Complexes \label{fig:complexes}}
\end{center}\end{figure}

This notion of coloured simplicial complex is not so common, however. It is more common to consider a version of coloured complexes where triangles can rotate freely, i.e. where we can permute the colours: $a$ for $b$, $b$ for $c$, $c$ for $a$, so that each triangle has a cyclic ordering of its facets but no privileged facet $a$. The cyclic ordering is then interpreted an orientation: when two facets are glued together in the complex, their orientation must be opposed, so that the two adjacent triangles have the same orientation. This leads to {\em oriented $2D$ simplicial complexes}. Fig. \ref{fig:complexes} summarizes the three kinds of $2D$ simplicial complexes we have mentioned. Definition \ref{def:graphs} captured $2D$ coloured complexes as graphs. How can we capture oriented $2D$ simplicial complexes as graphs? 

\VL{First, we}\VS{We need to} define rotations of the vertices of the graphs in a way that corresponds to rotating the triangles of coloured complexes. Namely, vertex rotations simply permute the ports of the vertex, whilst preserving the rest of the graph:
\begin{definition}[Vertex Rotation]
Let $p_{\operatorname{ports}}$ be some cyclic permutation over $\{a,b,c\}$, and $p_{\operatorname{labels}}$ be some bijection from $\Sigma$ to itself such that $p_{\operatorname{labels}}^3=id$. Let $G$ be a graph and $u\in V(G)$ one of its vertices. Then $r_uG=G'$ is such that $V(G')=V(G)$ and:
\begin{itemize}
\item $\{v:i,w:j\}\in E(G) \wedge v\neq u \wedge w\neq u \Leftrightarrow \{v:i,w:j\}\in   E(G')$.
\item $\{u:i,v:j\}\in E(G) \Leftrightarrow \{u:p_{\operatorname{ports}}(i),v:j\}\in   E(G')$.
\item $\sigma'(u)=p_{\operatorname{labels}}(\sigma(u))$, whereas $\sigma'(v)=\sigma(v)$ for $v\neq u$.
\end{itemize}
\end{definition}
(From now on in order to simplify notations we  will drop all labels $\sigma(.)\in \Sigma$, though all the results of this paper carry through to labelled graphs.)
\VL{A rotation sequence $\overline{r}$ is a finite composition of rotations $r_{u_1}, r_{u_2}, \ldots$. Since rotations commute with each other, a rotation sequence can be seen as a multiset, i.e. a set whose elements can appear several times. Hence, the union of two rotation sequences $\overline{r_1}\sqcup\overline{r_2}$ refers to multiset union. Moreover, since $r_u^3=Id$, we can consider that each rotation appears at most two times in a rotation sequence.}

\VL{Second, we define the equivalence relation induced by the rotations. Using this equivalence relation, we can define graphs in which vertices have cyclic ordering of their edges, but no privileged edge $a$. 
\begin{definition}[Rotation Equivalence]
Two graphs $G$ and $H$ are rotation equivalent if there exists a sequence of rotations $\overline{r}$ such that $\overline{r} G=H$. This equivalence relation is denoted $G\equiv H$.
\end{definition}}

\noindent {\bf Who is next to whom?} On the one hand in the world of $2D$ simplicial complexes, two simplices are adjacent if they share a point. On the other hand in the world of graphs, two vertices are adjacent if they share an edge. These two notions do not coincide, as shown in Fig. \ref{fig:correspondance}. The figure also shows that two triangles share a point if and only if their corresponding vertices are related by a monotonous path:
\begin{definition}[Alternating paths]
Let $\Pi=\{a,b,c\}^2$. We say that $u\in\Pi^*$ is a {\em path} of the graph $G$ if and only if there is a sequence $u$ of pairs of ports $q_ip_i$ such that it is possible to travel in the graph according to this sequence, i.e. there exists $v_0,\ldots , v_{|u|}\in V(G)$ such that for all $i\in\{0\ldots |u|-1\}$, one has $\{v_i: q_i,v_{i+1}: p_i\}\in E(G)$, with $u_i=q_ip_i$.
We say that a path $u=q_0p_0\ldots q_{|u|}p_{|u|}$ alternates at $i= 0\ldots |u|-2$ if either $p_i=q_{i+1}+1$ and $p_{i+1}=q_{i+2}-1$, or $p_i=q_{i+1}-1$ and $p_{i+1}=q_{i+2}+1$. A path is {\em $k$-alternating} if and only if it has exactly $k$ alternations. A path is {\em monotonous} if and only if it does not alternate.
\end{definition}
Thus distance one in complexes is characterized by the existence of a $0$-alternating path. More generally, distance $k+1$ in complexes is characterized by the existence of a $k$-alternating path. 
Recall that our aim is to define a CA-like model of computation over these complexes. In CA models, each cell must have a bounded number of neighbours (or a bounded ``star'' in the vocabulary of complexes). This bounded-density of information hypothesis \cite{Gandy} is the first justification for the following restriction upon the graphs we will consider: 
\begin{definition}[Bounded-star Graphs]\label{def:bsgraphs}
A graph $G$ is {\em bounded-star} of bound $s$ if and only if is monotonous paths are of length less or equal to $s$. 
\end{definition}
Notice that the property is stable under rotation. A further justification for this restriction will be given later. 

\section{Causal Graph Dynamics}

We now provide the essential definitions of CGD, through their constructive presentation, namely as localizable dynamics. We will not detail nor explain nor motivate these definitions in order to avoid repetitions with \cite{ArrighiCGD,ArrighiIC,ArrighiCayley}. Still, notice that in \cite{ArrighiCGD,ArrighiIC,ArrighiCayley} this constructive presentation is shown equivalent to an axiomatic presentation of CGD, which establishes the full generality of this formalism. The bottom line is that these definitions capture all the graph evolutions which are such that information does not propagate too fast and which act everywhere the same. 
\VL{   
\begin{definition}[Isomorphism]
An isomorphism is specified by a bijection $R$ from $V$ to $V$ and acts on a graph $G$ as follow:
\begin{itemize}
\item $V(R(G))=R(V(G))$
\item $\{u:k,v:l\}\in E(G) \Leftrightarrow \{R(u):k,R(v):l\}\in E(R(G))$
\end{itemize}
We similarly define the isomorphism $R^*$ specified by the isomorphism $R$ as the function acting on graphs $G$ such that $V(G)\subseteq \mathcal{P}(V.\{\varepsilon,1,...,b\})$ for any bound $b$ as follow:
\begin{itemize}
\item $R^*(\{u.i,v.j,...\})=\{R(u).i,R(v).j,...\}$
\item $V(R^*(G))=R^*(V(G))$
\item $\{u:k,v:l\}\in E(G) \Leftrightarrow \{R^*(u):k,R^*(v):l\}\in E(R^*(G))$
\end{itemize}
\end{definition}
}
\VL{
\begin{definition} [Consistent]
Consider two graphs $G$ and $H$ in ${\cal G}_{\pi}$, they are {\em consistent} if and only if for all vertices $u,v,w$ and ports $k,l,p$ we have:
$$ \{u:k,v:l\} \in E(G) \wedge  \{u:k,w:p\}\in E(H) \Rightarrow w=v \wedge l=p$$
\end{definition}
}
\begin{definition}[Local Rule]
A function $f:\mathcal{D}^r_\pi \rightarrow \mathcal{G}_\pi$ is called a local rule if there exists some bound $b$ such that:
\begin{itemize}
\item For all disk $D$ and $v'\in D$, $v'\in V(f(D)) \Rightarrow v' \subseteq V(D).\{\varepsilon,1,...,b\}$.
\item For all graph $G$ and all disks $D_1,D_2 \subset G$, $f(D_1)$ and $f(D_2)$ are consistent.
\item For all disk $D$ and all isomorphism $R$, $f(R(D))=R^*(f(D))$\VL{.}\VS{, with $R^*(\{u.i,v.j,...\})=\{R(u).i,R(v).j,...\}$.}
\end{itemize}
\end{definition}
\VL{
\begin{definition}[Union]
The union $G\cup H$ of two consistent graphs $G$ and $H$ is defined as follow:
\begin{itemize}
\item $V(G\cup H)=V(G)\cup V(H)$
\item $E(G\cup H)=E(G)\cup E(H)$
\end{itemize}
\end{definition}
}

\begin{definition}[Localizable Dynamics, a.k.a CGD]\cite{ArrighiCGD,ArrighiIC,ArrighiCayley}
A function $F$ from ${\cal G}_{ \pi}$ to ${\cal G}_{ \pi}$ is a {\em localizable dynamics}, or CGD, if and only if there exists $r$ a radius and $f$ a local rule from ${\cal D}^r_{ \pi}$ to ${\cal G}_{ \pi}$ such that for every graph $G$ in ${\cal G}_{\Sigma,r}$, 
$$F(G)=\bigcup_{v\in G} f(G^r_v).$$
\end{definition}

CGD act on arbitrary graphs. To compute the image graph, they can make use of the information carried out by the ports of the input graph. Thus, they can readily be interpreted as ``Causal Dynamics of Coloured Simplicial Complexes''. But what we are really interested in ``Causal Dynamics of Bounded-star Oriented Simplicial Complexes'', which we will call ``Causal Complexes Dynamics'' for short.

\section{Causal Complexes Dynamics}
This section formalizes Causal Complexes Dynamics (CCD). 

\noindent {\bf Rotation-commutating.} First, we will restrict CGD so that they may use the information carried out by ports, but only as far as it defines an orientation. Formally, this means restricting to dynamics which commute with graphs rotations.

\begin{definition}[Rotation-Commuting function]
A function $F$ from $\mathcal{G}_\pi$ to  $\mathcal{G}_\pi$ is rotation-commuting if and only if for all graph $G$ and all sequence of rotations $\overline{r}$ there exists a sequence of rotations $\overline{r}^*$ such that $ F(\overline{r} G)=\overline{r}^* F(G) $. Such an $\overline{r}^*$ is called a conjugate of $\overline{r}$. The definition extends naturally to functions from $\mathcal{D}_\pi$ to $\mathcal{G}_\pi$.
\end{definition}

\VL{
\begin{lemma}
For all finite set of graphs $G_1,...,G_n$ and for all set of rotation sequences $\overline{r_1},...,\overline{r_n}$, if $G_1,...,G_n$ are consistent with each other, and $\overline{r_1} G_1,...,\overline{r_n} G_n$ are consistent with each other, then 
$$ \bigcup_{i\in\{1,...,n\}} \overline{r_i}G_i =\left(\displaystyle{\bigsqcup_{i\in\{1,...,n\}}} \overline{r_i}\right) \bigcup_{i\in\{1,...,n\}} G_i $$
\end{lemma}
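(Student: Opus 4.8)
The plan is to reduce to the case of two graphs by induction on $n$, using the two-graph case (established separately below) as a black box. The case $n=1$ is immediate. For the inductive step, write $\bigcup_{i=1}^n \overline{r_i}G_i = \left(\bigcup_{i=1}^{n-1}\overline{r_i}G_i\right)\cup \overline{r_n}G_n$. The subfamilies $G_1,\dots,G_{n-1}$ and $\overline{r_1}G_1,\dots,\overline{r_{n-1}}G_{n-1}$ are still pairwise consistent, so the induction hypothesis yields $\bigcup_{i<n}\overline{r_i}G_i = \left(\bigsqcup_{i<n}\overline{r_i}\right)\bigcup_{i<n}G_i$. Pairwise consistency of the $G_i$ (resp.\ of the $\overline{r_i}G_i$) makes $\bigcup_{i<n}G_i$ consistent with $G_n$ (resp.\ $\bigcup_{i<n}\overline{r_i}G_i$ consistent with $\overline{r_n}G_n$), so the two-graph case applies to the pair $\left(\bigcup_{i<n}G_i,\;G_n\right)$ with rotation sequences $\left(\bigsqcup_{i<n}\overline{r_i},\;\overline{r_n}\right)$; associativity of multiset union and of graph union then closes the induction.

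It remains to treat two graphs, i.e.\ to show $\overline{r_1}G_1\cup\overline{r_2}G_2 = (\overline{r_1}\sqcup\overline{r_2})(G_1\cup G_2)$. All four graphs have vertex set $V(G_1)\cup V(G_2)$, since rotations and unions leave vertex sets untouched, so only the edge sets are at stake. I would encode the net effect of a rotation sequence $\overline{r}$ on a vertex $u$ by the number $\overline{r}(u)\in\{0,1,2\}$ of occurrences of $r_u$ (well defined since the $r_u$ commute and $r_u^3=Id$); fixing a cyclic permutation $\rho$ of the three ports (the two such permutations are mutually inverse, so a rotation sequence acts on each vertex's ports by a power of $\rho$), an edge $\{u:p,v:q\}$ of $G$ becomes $\{u:\rho^{\overline{r}(u)}(p),\;v:\rho^{\overline{r}(v)}(q)\}$ in $\overline{r}G$, and $(\overline{r_1}\sqcup\overline{r_2})(u)=\overline{r_1}(u)+\overline{r_2}(u)\bmod 3$. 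Expanding both sides accordingly, an edge $\{u:p,v:q\}$ lies in the left-hand side iff, for some $i\in\{1,2\}$, the pre-image $\{u:\rho^{-\overline{r_i}(u)}(p),\;v:\rho^{-\overline{r_i}(v)}(q)\}$ is an edge of $G_i$, and in the right-hand side iff the analogous pre-image, now with exponent $\overline{r_1}(u)+\overline{r_2}(u)$, is an edge of $G_1\cup G_2$.

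Proving these two conditions equivalent is the crux, and I expect it to be the main obstacle. It hinges on a coherence property: on the ports of $u$ which actually carry an edge in $G_i$, the rotation contributed by the other sequence must be invisible. This is where the hypothesis that $\overline{r_1}G_1$ and $\overline{r_2}G_2$ are consistent is essential: if the two rotated graphs both put an edge at a given port of $u$, consistency forces those edges to coincide, and tracing this identity back through the rotations constrains $\overline{r_1}(u)$ against $\overline{r_2}(u)$ precisely on the ports that matter; the hypothesis that $G_1$ and $G_2$ are consistent plays the same role for the pre-rotation picture. I would organize the argument as a case analysis on whether $u$ carries an edge in $G_1$ only, in $G_2$ only, or in both: the single-graph cases are handled by the observation that rotating a vertex with no incident edge in a graph leaves that graph unchanged, and the "both" case by the consistency argument just sketched. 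Collecting these port-by-port identities over all vertices $u$ gives the equality of edge sets, hence the lemma.
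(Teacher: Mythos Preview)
Your overall strategy mirrors the paper's: reduce to the binary case (the paper simply asserts this; your explicit induction is sound), then analyse the interaction between rotation and union on two graphs. The paper proceeds rotation-by-rotation, peeling one $r_u$ at a time from $\overline{r_1}$ and splitting on whether $r_u$ also lies in $\overline{r_2}$; you instead work edge-by-edge, encoding the net rotation at each vertex as an exponent in $\{0,1,2\}$. These are cosmetically different but conceptually parallel.

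There is, however, a genuine gap in your two-graph argument, and it is one the paper's own proof shares: the lemma as stated is false. Take $G_1$ with $V=\{u,v\}$ and single edge $\{u{:}a,\,v{:}a\}$; take $G_2$ with $V=\{u,w\}$ and single edge $\{u{:}c,\,w{:}a\}$; let $\overline{r_1}=r_u$ and $\overline{r_2}=\varnothing$. Then $G_1,G_2$ are consistent (they share only $u$, at disjoint ports), and $\overline{r_1}G_1$ (edge $\{u{:}b,\,v{:}a\}$) is consistent with $\overline{r_2}G_2=G_2$. The left-hand side carries edges $\{u{:}b,\,v{:}a\}$ and $\{u{:}c,\,w{:}a\}$, whereas the right-hand side $r_u(G_1\cup G_2)$ carries $\{u{:}b,\,v{:}a\}$ and $\{u{:}a,\,w{:}a\}$.

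Your sketch breaks exactly at this point. In the ``both'' case you expect consistency of the rotated graphs to constrain $\overline{r_1}(u)$ against $\overline{r_2}(u)$, but it need not: when the occupied ports at $u$ remain disjoint after rotation, consistency imposes nothing. In the ``single-graph'' case your observation that rotating a vertex with no incident edge leaves that graph unchanged is true of $G_2$ by itself, but irrelevant to the right-hand side, where the surplus $\overline{r_2}(u)$ is applied to $G_1\cup G_2$ and therefore moves the $G_1$-edges at $u$. The paper's proof commits the analogous error, asserting that $u\in V(G_2)$ already forces inconsistency of the rotated graphs; the example above refutes that claim. So the defect lies upstream of both arguments, in the hypotheses of the lemma itself.
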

\begin{proof}
Notice that we only need to prove this result for the union of two graphs.
Let us consider two graphs $G_1,G_2$ and two rotation sequences $\overline{r_1},\overline{r_2}$ such that $G_1$ and $G_2$ are consistent and $\overline{r_1}G_1,\overline{r_2}G_2$ are consistent. Let us consider some rotation $r_u$ appearing only once in $\overline{r_1}$. There are two possible cases:
\begin{itemize}
\item $r_u \notin \overline{r_2}$: In this case, $r_u$ acts on $G_1\setminus(G_1\cap G_2)$. 
Indeed, if $u\in V(G_2)$ then $r_u G_1$ and $G_2$ can not be consistent as $u$ has been rotated in the first graph and not in the second.
As $u$ only appears in a part of the graph that is left unchanged by the union, we have that $(r_u (\overline{r_1}\setminus r_u) G_1 )\cup (\overline{r_2} G_2)= r_u[(\overline{r_1}\setminus r_u) G_1 \cup \overline{r_2} G_2] $.
\item $r_u \in \overline{r_2}$: In this case, the two graphs $(\overline{r_1}\setminus r_u)G_1$ and $(\overline{r_2}\setminus r_u)G_2$ are consistent and the rotations sequences $(\overline{r_1}\setminus r_u)$ and $(\overline{r_2}\setminus r_u)$ leave the vertex $u$ unchanged. It is easy to check that the graphs $r_u\left[ (\overline{r_1}\setminus r_u)G_1\cup(\overline{r_2}\setminus r_u)G_2     \right]$ and $\overline{r_1}G_1\cup\overline{r_2}G_2$ are the same.
\end{itemize}
The case where $r_u$ appears more than once in $\overline{r_1}$ can be proven similarly.
By commuting all the rotations with the $\cup$ operator, we have that:
$$ (\overline{r_1}\sqcup \overline{r_2})(G_1\cup G_2)= (\overline{r_1}G_1)\cup(\overline{r_2}G_2) $$
\end{proof}
}

The next question is ``When is a CGD rotation-commuting?''. More precisely, can we decide, given the local rule $f$ of a CGD $F$, whether $F$ is rotation-commuting? The difficulty is that being rotation-commuting is a property of the global function $F$. Indeed, a first guess would be that $F$ is rotation-commuting if and only if $f$ is rotation-commuting, but this turns out to be false. 
\begin{example}
 (Identity function) . Consider the local rule of radius $1$ over graphs of degree $2$ which acts as the identity in every cases but those given in Fig. \ref{fig:noncom}. Because of these two cases, the local rule makes use the information carried out by the ports around the center of the neighbourhood. It is not rotation-commuting. Yet, the CGD it induces is just the identity, which is trivially rotation-commuting.
\begin{figure}
\begin{center}
\includegraphics[scale=0.8]{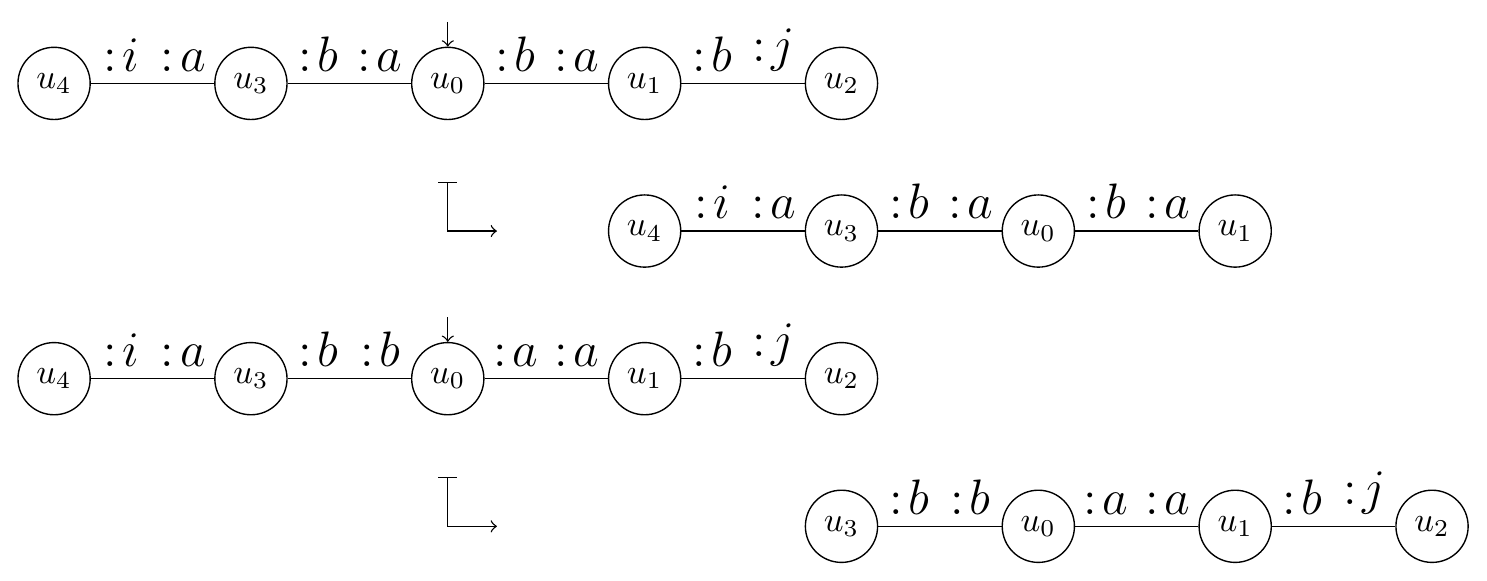}
\end{center}
\caption{A non-rotation commuting local rule induces a rotation commuting CGD.}\label{fig:noncom}
\end{figure}
\end{example}
Thus, unfortunately, some rotation-commuting $F$ can be induced by a non-rotation-commuting $f$.
Yet, fortunately, any rotation-commuting $F$ can be induced by a rotation-commuting $f$.
\begin{theorem}
Let $F$ be a localizable dynamics. $F$ is rotation-commuting if and only if there exists a rotation-commuting local rule $f$ which induces $F$.
\end{theorem}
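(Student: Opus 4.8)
The plan is to prove the two implications separately. The implication ``a rotation-commuting local rule induces a rotation-commuting dynamics'' is a short computation using the commuting-unions-with-rotations Lemma stated above; the converse is the substantial half, and requires rebuilding a symmetric local rule out of the given one, at the price of enlarging the radius.

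For the easy direction, I would suppose $f$ is a rotation-commuting local rule of radius $r$ inducing $F$. It suffices to treat a single rotation $r_u$; a general rotation sequence then follows by composition, reading the successive conjugates off intermediate graphs. Since rotations preserve adjacency, the ball $B_r(v)$ is the same in $G$ and in $r_u G$, so $(r_u G)^r_v$ equals $G^r_v$ when $u\notin B_r(v)$, and equals $r_u(G^r_v)$ otherwise (a legitimate vertex rotation of the disk, $u$ then being one of its vertices). Applying the rotation-commutation of $f$ gives $f((r_u G)^r_v)=\overline{s}_v f(G^r_v)$ for a rotation sequence $\overline{s}_v$ (empty in the first case). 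The graphs $\{f(G^r_v)\}_v$ are pairwise consistent by the local-rule property, and the graphs $\{\overline{s}_v f(G^r_v)\}_v$ are pairwise consistent since their union is the genuine graph $F(r_u G)$; hence the Lemma above pulls all the $\overline{s}_v$ out of the union, giving $F(r_u G)=\big(\bigsqcup_v \overline{s}_v\big)F(G)$, so $F$ is rotation-commuting. (For infinite $G$ this is applied edge by edge, membership of an edge in a union being a finite condition.)

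For the converse, I would suppose $F$ is rotation-commuting, induced by a local rule $f$ of radius $r$. The idea is to read off, from a disk of radius $r'=2r+1$, exactly the portion of $F$ ``produced at the centre''. For $D\in\mathcal{D}^{r'}_{\pi}$ pointed at $v_0$, define $f'(D)$ to be the subgraph of $F(D)$ (well defined, $D$ being a finite graph) spanned by every vertex whose name mentions $v_0$, together with every edge of $F(D)$ incident to such a vertex and that edge's other endpoint. Since any vertex produced from a disk $G^r_v$ carries a name built from $V(B_r(v))$, a $v_0$-mentioning vertex, and any edge at it, must be produced by some $v\in B_r(v_0)$, whose disk $G^r_v$ lies entirely inside $D=G^{r'}_{v_0}$; hence $f'(D)$ depends on $D$ alone, its vertex names lie in $V(D).\{\varepsilon,\ldots,b\}$, and $F(G^{r'}_{v_0})$ agrees with $F(G)$ on this portion of the output. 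One then checks: (i) $f'$ is a local rule --- the bound is inherited, consistency of $f'(D_1),f'(D_2)$ for $D_1,D_2\subset G$ holds because both are subgraphs of the single graph $F(G)$, and isomorphism-commutation holds since $F$ is a CGD and ``the name mentions $v_0$'' is renaming-invariant; (ii) $f'$ induces $F$, since $\bigcup_{v_0\in G} f'(G^{r'}_{v_0})$ is the union over $v_0$ of the $v_0$-mentioning part of $F(G)$, and this exhausts $F(G)$ because every vertex mentions some old vertex and every edge is incident to such a vertex; (iii) $f'$ is rotation-commuting, since for $u\in V(D)$ the graph $f'(r_u D)$ is the $v_0$-mentioning part of $F(r_u D)=\overline{r_u}^{*}F(D)$, which, a rotation sequence changing no names, is just $\overline{r_u}^{*}$ restricted to $V(f'(D))$ applied to $f'(D)$; general rotation sequences on $D$ follow by composition.

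I expect the converse to be the main obstacle, and within it the locality bookkeeping: pinning down exactly which vertices and edges of $F(G)$ are determined by $G^{r'}_{v_0}$ (so that $f'$ is well defined on disks and still reassembles to $F$), and checking that the cover of $V(F(G))$ indexed by ``which old vertex is mentioned'' yields pairwise-consistent pieces with no edge lost. Once $f'$ has been identified as ``$F$ cut down to the centre's contribution'', rotation-commutation of $f'$ is essentially immediate, being inherited from that of the global $F$.
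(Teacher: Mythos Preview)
Your forward direction is essentially the paper's argument: apply the rotation-commutation of $f$ disk by disk and then invoke the Lemma to pull the rotation sequences through the union.

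For the converse you take a genuinely different route. The paper does \emph{not} enlarge the radius; instead it symmetrises the given $f$ over all rotation sequences, setting
\[
\tilde{f}(G^r_u)\;=\;\bigcup_{\overline{r}}\;(\overline{r}^{\,*})^{-1}\,f(\overline{r}\,G^r_u),
\]
where $\overline{r}^{\,*}$ is a conjugate of $\overline{r}$ supplied by the rotation-commutation of $F$. Since each term is a subgraph of $F(G)$ the union is consistent; $\tilde{f}$ is then shown to be a local rule of the \emph{same} radius $r$, to induce $F$ (the identity term already gives $F$, the others add nothing new), and to be rotation-commuting via a change of summation variable $\overline{t}=\overline{r}\,\overline{s}$ together with the Lemma. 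Your construction instead passes to radius $2r+1$ and carves out of $F(D)$ the piece whose vertex names mention the centre; rotation-commutation of $f'$ is then inherited verbatim from that of $F$, because rotations do not touch names. Both arguments are correct. The paper's buys the sharper statement that no increase in radius is needed; yours buys a cleaner inheritance of the symmetry (you never have to manipulate the conjugates $\overline{r}^{\,*}$ explicitly, nor reindex a union over rotation sequences), at the cost of the locality bookkeeping you identify --- which you handle correctly, the key point being that any output vertex mentioning $v_0$, and any edge at it, is already produced by some $f(G^r_v)$ with $v\in B_r(v_0)$, hence by a disk entirely inside $G^{2r+1}_{v_0}$.
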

\VL{
\begin{proof}
$[\Leftarrow]$ Let us consider a rotation-commuting local rule $f$ of radius $r$ inducing a localizable dynamics $F$. Let $G$ be a graph and $u$ a vertex of $G$. The following sequence of equalities proves that $F$ is rotation-commuting:

\[
\begin{array}{lclr}
   F(\overline{r} G) & = &\displaystyle{\bigcup_{v\in G}} f(\overline{r} G^r_v) &\\
   & = & \displaystyle{\bigcup_{v\in G}} \overline{r}_v^* f(G^r_v) &\textrm{  (using $f$ rotation-commuting) }
\end{array}
\]
Using lemma $1$, we can commute the union operator and the sequences of rotations $\overline{r}_v$ as follow:
$$ F(\overline{r} G)= \left(\displaystyle{\bigsqcup_{v\in G}} \overline{r_v}^*\right) \bigcup_{v\in G} f(G^r_v)=\left(\displaystyle{\bigsqcup_{v\in G}} \overline{r_v}^*\right) F(G) $$
Less formally, we can commute rotations and unions by looking at the highest power with which the rotations appear at the right of the union operator.\medskip

\noindent $[\Rightarrow]$ Let $F$ be a rotation-commuting localizable function, and $f$ a local rule inducing $F$. Informally, since $f(G_u^r)$ is included in $F(G)$ we know that as far as orientation is concerned $f$ will indeed be rotation-commuting. However it may still happen that $f$, depending upon the orientation of $G_u^r$, will produce a smaller, or a larger, subgraph of $F(G)$. Therefore, we must define some $\tilde{f}$ which does not do that. Let us consider the following function $\tilde{f}$ from $\mathcal{D}_\pi$ to $\mathcal{G}_\pi$:

$$\forall G_u^r, \tilde{f}(G_u^r)= \bigcup_{ \overline{r} } {\overline{r}^*}^{-1} f(\overline{r} G_u^r)  $$
with $\overline{r}^*$ a conjugate of $\overline{r}$ (given by $F$ rotation-commuting).
\begin{itemize}
\item $\tilde{f}$ is well defined: 
By definition of $\overline{r}^*$ we have that:
\[
\begin{array}{llllr}
 &\forall  \overline{r} ,& f(\overline{r} G_u^r) \subset \overline{r}^* F(G) &\\
 \Rightarrow & \forall  \overline{r}, & {\overline{r}^*}^{-1} f( \overline{r} G_u^r)\subset F(G)& (*)\\
 \Rightarrow & \forall \overline{r_1}, \overline{r_2} , & {\overline{r_1}^*}^{-1} f(\overline{r_1}G_u^r)\ \textrm{and}\ {\overline{r_2}^*}^{-1} f(\overline{r_2}G_u^r)\ \textrm{consistent}&
 \end{array}
\]
\item $\tilde{f}$ is a local rule: we can check that it inherits of the local rule properties of $f$.
\item $\tilde{f}$ induces $F$:
\[\begin{array}{llll}
 \displaystyle{\bigcup_{v\in G}} \tilde{f}(G_u^r)& = & \bigcup_{v\in G} \left[ \displaystyle{\bigcup_{\overline{r}}} {\overline{r}^*}^{-1} f(\overline{r}G_u^r) \right]&\\
 &=& \displaystyle{\bigcup_{v\in G}} \left[ f(G_u^r) \cup \left(\displaystyle{\bigcup_{\overline{r} \neq id}} {\overline{r}^*}^{-1} f(\overline{r}G_u^r)\right)\right]&\\
 &=& F(G)  \cup \displaystyle{\bigcup_{v\in G}}\left(\displaystyle{\bigcup_{\overline{r} \neq id}} {\overline{r}^*}^{-1} f(\overline{r}G_u^r)\right)& \\
 &=F(G)& &\textrm{since $(*)$}
\end{array}\]
\item $\tilde{f}$ is rotation-commuting: let us consider a sequence of rotations $\overline{s}$. We have:
$$\tilde{f}(\overline{s} G_u^r )= \bigcup_{\overline{r} } {\overline{r}^*}^{-1} f(\overline{r}\overline{s}G_u^r)$$
Let us define $\overline{t}=\overline{r}\overline{s}$. As $\overline{r}$ spans all rotations sequences, $\overline{t}$ spans all rotations sequences. We can write:

\[
\begin{array}{lllr}
 \tilde{f}(\overline{s}G_u^r)& =& \displaystyle{\bigcup_{\overline{t}}} {\overline{r}^*}^{-1}f(\overline{t}G_u^r) &\\
	 &=& \displaystyle{\bigcup_{\overline{t}}} {\overline{r}^*}^{-1} \overline{t}^* {\overline{t}^*}^{-1}   f(\overline{t}G_u^r)&\\
	 &=&\left(\displaystyle{\bigsqcup_{\overline{t}}} {\overline{r}^*}^{-1} \overline{t}^*  \right)\displaystyle{\bigcup_{\overline{t}}} {\overline{t}^*}^{-1}   f(\overline{t}G_u^r)&\textrm{using lemma 1}\\
 &=& \left(\displaystyle{\bigsqcup_{\overline{t}}} {\overline{r}^*}^{-1} \overline{t}^*  \right)\tilde{f}(G_u^r)&
\end{array}
\]
\end{itemize}
\end{proof}
}

\begin{proposition}[Decidability of rotation commutation]
Given a local rule $f$, it is decidable whether $f$ is rotation-commuting.
\end{proposition}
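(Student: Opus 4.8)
The plan is to reduce the property ``$f$ is rotation-commuting'' to a finite collection of mechanical verifications. By the definition given above (extended to functions $\mathcal{D}^r_\pi\to\mathcal{G}_\pi$), $f$ is rotation-commuting exactly when, for every disk $D$ of radius $r$ and every rotation sequence $\overline{r}$, there is a rotation sequence $\overline{r}^{\,*}$ with $f(\overline{r}D)=\overline{r}^{\,*}f(D)$; the point is that all three quantifiers may be replaced by ranges over finite sets. First, since $|\pi|=3$ every vertex has degree at most $3$, so a disk of radius $r$ has at most $3\cdot 2^{r}$ vertices and there are only finitely many disks of radius $r$ up to isomorphism; we fix one representative $D$ per class (this is legitimate by the isomorphism argument below). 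Second, for a fixed disk $D$ on vertex set $W$, a rotation sequence acts on $D$ only by permuting ports and relabelling vertex by vertex, and since $r_u^3=\mathit{id}$ and rotations commute, the graph $\overline{r}D$ depends on $\overline{r}$ through a finite amount of data only (ignoring labels, an element of $(\mathbb{Z}/3\mathbb{Z})^{W}$); hence $\{\,\overline{r}D : \overline{r}\text{ a rotation sequence}\,\}$ is a finite set of graphs on vertex set $W$, which we enumerate, keeping one witnessing $\overline{r}$ for each. Third, a candidate conjugate $\overline{r}^{\,*}$ only needs to rotate vertices of $f(D)$ (rotating an absent vertex does nothing), and $V(f(D))$ is finite (each element being a subset of $W.\{\varepsilon,1,\ldots,b\}$), so there are finitely many graphs $\overline{r}^{\,*}f(D)$ to try; and ``$\overline{r}^{\,*}f(D)=f(\overline{r}D)$?'' is a decidable equality test between two graphs on finite vertex sets.

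The decision procedure is then: for each representative disk $D$ and each graph $D'$ in the finite set $\{\overline{r}D\}$, evaluate the local rule to obtain $f(D)$ and $f(D')$, then search the finitely many rotation sequences $\overline{r}^{\,*}$ on $V(f(D))$ for one with $\overline{r}^{\,*}f(D)=f(D')$; answer ``yes'' iff such a conjugate is found for every pair $(D,D')$. Soundness is immediate from the definition: if $f$ is rotation-commuting, every instance admits a conjugate, which the exhaustive search finds.

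Completeness requires showing that success on the chosen representatives implies $f(\overline{r}G)=\overline{r}^{\,*}f(G)$ for \emph{all} disks $G$ and \emph{all} rotation sequences $\overline{r}$. Two reductions close the gap. (i) If $\overline{r}_1 D=\overline{r}_2 D$ then $f(\overline{r}_1 D)=f(\overline{r}_2 D)$, so a conjugate valid for the witnessing sequence of that graph is valid for any sequence producing it --- this justifies enumerating $\{\overline{r}D\}$ rather than all sequences. (ii) If $D'=R(D)$ for an isomorphism $R$, then rotation-commutation of $f$ at $D$ transfers to $D'$: any rotation sequence $\overline{s}$ on $V(D')$ is of the form $\overline{s}=R\,\overline{r}\,R^{-1}$ with $\overline{r}$ a rotation sequence on $V(D)$ (isomorphisms preserve ports, so $R(r_uD)=r_{R(u)}R(D)$ with the same port permutation), hence $f(\overline{s}D')=f\bigl(R(\overline{r}D)\bigr)=R^{*}\bigl(f(\overline{r}D)\bigr)=R^{*}\bigl(\overline{r}^{\,*}f(D)\bigr)$ by equivariance of the local rule under isomorphism and the hypothesis at $D$; since $R^{*}$ again merely renames vertices, this equals $\overline{s}^{\,*}\bigl(R^{*}f(D)\bigr)=\overline{s}^{\,*}f(D')$ for the transported sequence $\overline{s}^{\,*}$. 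I expect step (ii) --- tracking how $R$, $R^{*}$, and the rotations interleave --- to be the only mildly delicate point, but it is routine bookkeeping with the definitions of isomorphism, vertex rotation and local rule, and presents no real difficulty.
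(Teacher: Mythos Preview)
Your argument is correct and follows essentially the same route as the paper: finitely many disks up to isomorphism (via equivariance of the local rule), finitely many rotated versions of each, and finitely many candidate conjugates on the finite image graph. The paper's proof is terser in one respect --- it checks only single generators $r_u$ with $u\in V(D)$ rather than enumerating all of $\{\overline{r}D\}$, relying implicitly on the fact that commutation with each generator on every disk composes to commutation with arbitrary sequences --- but this is a minor efficiency, not a different idea, and your more explicit completeness argument (points (i) and (ii)) covers ground the paper leaves tacit.
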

\VL{
\begin{proof}
There exists a simple algorithm to verify that $f$ is rotation-commuting.
Let $r$ be the radius of $f$.
We can check that for all disk $D\in\mathcal{D}^r_\pi$ and for all vertex rotation $r_u$, $u\in V(D)$, we have the existence of a sequence $\overline{r}$ such that $f(r_u D)=\overline{r}f(D)$.\\ As the graph $f(D)$ is finite, there is finite number of sequences $\overline{r}$ to test. Indeed, if $|V(f(D))|=k$, we only have $3^k$ different sequences we can apply on $f(D)$ (for each vertex $u$, we can apply $r_u$ 0,1 or 2 times).
Notice that as $f$ is a local rule, changing the names of the vertices in $D$ will not change the structure of $f(D)$ and thus we only have to test the commutation property on a finite set of disks.
\end{proof}
}

\noindent  {\bf Bounded-star preserving.} Second, we will restrict CGD so that they preserve the property of a graph being bounded-star. Indeed, we have explained in Section \ref{sec:complexesasgraphs} that the graph distance between two vertices does not correspond to the geometrical distance between the two triangles that they represent. By modelling CCD via CGD, we are guaranteeing that information does not propagate too fast with respect to the graph distance, but not with respect to the geometrical distance. The fact that the geometrical distance is less or equal to the graph distance is falsely reassuring: the discrepancy can still lead to an unwanted phenomenon as depicted in Fig. \ref{fig:boundedstar}.\\
\begin{figure}\begin{center}
\includegraphics[scale=0.5]{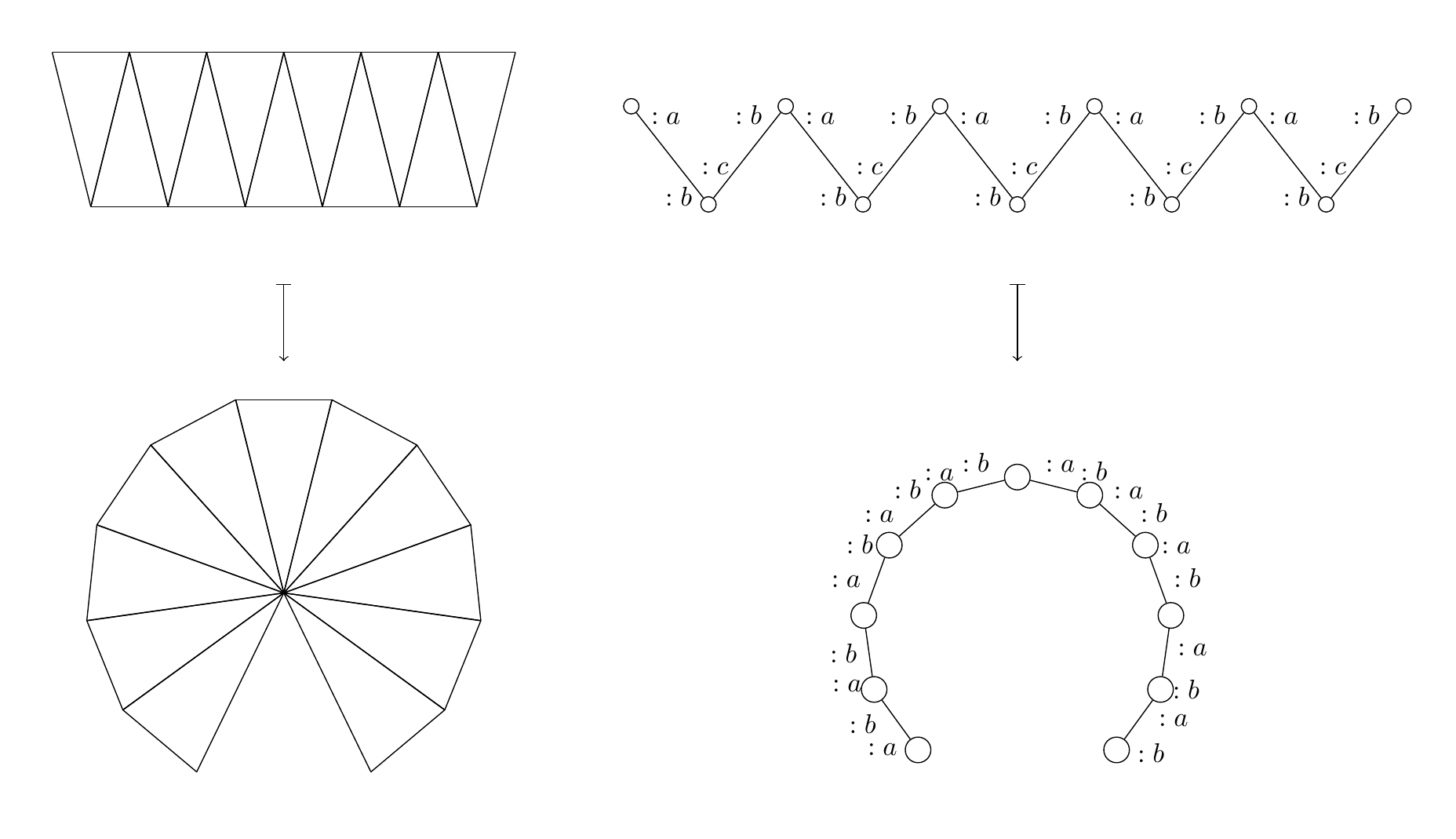}
\caption{An unwanted evolution: sudden collapse in geometrical distance. \label{fig:boundedstar}
{\em Left column:} in terms of complexes. {\em Right column.} In terms of graph representation.}
\end{center}\end{figure}
Of course we may choose not to care about geometrical distance. But if we do care, then we must make the assumption that graphs are bounded-star. This assumption will not only serve to enforce the bounded-density of information hypothesis. It will also relate the geometrical distance and the graph distance by a factor $s$. As a consequence, the guarantee that information does not propagate too fast with respect to the geometrical distance will be inherited from its counterpart in graph distance. In particular, it will forbid the sudden collapse phenomenon of Fig. \ref{fig:boundedstar}. All we need to do, then, is to impose that CCD take bounded-star graphs into bounded-star graphs. This can be decided from its local rule.
\begin{definition}[Bounded-star preserving]
A CGD $F$ is \emph{bounded-star preserving} if and only if for all bounded-star graph $G$, $F(G)$ is also bounded-star.
A local rule $f$ is bounded-star preserving if and only if it induces bounded-star preserving a global dynamics $F$.
\end{definition}
\begin{proposition}[Decidability of bounded-star preservation]
Given a local rule $f$ and a bound $s$, it is decidable whether $f$ is bounded-star preserving with bound $s$.
\end{proposition}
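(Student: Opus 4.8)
The plan is to reduce, by a locality (compactness) argument, the question to testing $F$ (the CGD induced by $f$) on finitely many finite graphs. Being bounded-star preserving with bound $s$ means that every graph which is bounded-star of bound $s$ is mapped by $F$ to a graph which is again bounded-star of bound $s$. First, fix the radius $r$ and the bound $b$ witnessing that $f$ is a local rule. Since $|\pi|=3$, every graph in $\mathcal{G}_\pi$ has degree at most $3$, so there is a computable bound $M=M(r)$ on the number of vertices of any disk of radius $r$ (crudely, $M\le 1+3(2^r-1)$). Set $N=(2s+3)\,M$. The key claim to establish is: $F$ is bounded-star preserving with bound $s$ if and only if for every graph $H$ with at most $N$ vertices that is bounded-star of bound $s$, the graph $F(H)$ is bounded-star of bound $s$. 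Granting this, the decision procedure is immediate: up to isomorphism there are finitely many graphs with at most $N$ vertices, and both $F$ and the bounded-star property are invariant under vertex renaming (the former because local rules commute with isomorphisms); for each representative that is bounded-star of bound $s$ one computes $F(H)=\bigcup_{v\in V(H)}f(H^r_v)$, a finite effective computation; and one tests whether $F(H)$ is bounded-star of bound $s$, which is decidable since it suffices to inspect the finitely many paths of length $s+1$ in $F(H)$ (a monotonous path longer than $s$ has a monotonous prefix of length exactly $s+1$, monotonicity being a condition on consecutive port pairs of the path and hence inherited by prefixes). So everything rests on the key claim.

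One direction of the claim is trivial. For the other, start from a bounded-star graph $G$ of bound $s$ with $F(G)$ not bounded-star of bound $s$; the goal is to exhibit such a counterexample with at most $N$ vertices. By the prefix remark, $F(G)$ contains a monotonous path $P$ of length exactly $s+1$, hence involving $s+2$ vertices and $s+1$ edges of $F(G)$. Each of these vertices lies in $V(f(G^r_a))$ for some $a\in V(G)$, and each of these edges lies in $E(f(G^r_c))$ for some $c\in V(G)$; picking one such centre for each yields a set $S\subseteq V(G)$ with $|S|\le 2s+3$. Let $H$ be the subgraph of $G$ equal to the union of the disks $G^r_v$, $v\in S$; then $|V(H)|\le(2s+3)M=N$, and $H$ is the candidate counterexample.

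Two facts remain to be checked, and carrying them out is the main obstacle. First, $H^r_v=G^r_v$ for every $v\in S$: since $H\subseteq G$, distances in $H$ dominate distances in $G$, so no vertex outside $G^r_v$ enters the radius-$r$ ball of $v$ in $H$, while $G^r_v\subseteq H$ by construction ensures nothing is lost. Hence $f(H^r_v)=f(G^r_v)$ for $v\in S$, so $F(H)=\bigcup_{w\in V(H)}f(H^r_w)$ contains every vertex and edge of $P$; as $P$ is a path of $F(G)$ it is then a path of $F(H)$, still monotonous. Second, $H$ is bounded-star of bound $s$: since $E(H)\subseteq E(G)$, any monotonous path of $H$ is a monotonous path of $G$, hence of length at most $s$. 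Thus $H$ is a bounded-star graph of bound $s$ with at most $N$ vertices whose $F$-image contains the monotonous path $P$ of length $s+1$, proving the nontrivial direction. The delicate point is exactly this passage to the finite subgraph $H$: it must not alter the local-rule outputs at the centres generating $P$ (guaranteed by distance monotonicity under taking subgraphs) and must not destroy the bounded-star property (guaranteed because a monotonous path of a subgraph is a monotonous path of the ambient graph).
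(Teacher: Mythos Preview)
Your argument is correct, but the route differs from the paper's. Both are finiteness reductions, yet they carve out the finite test set differently. The paper fixes a center $u$, looks at the disk $D=G^r_u$ and at every disk $H$ of radius $2rs$ around $u$ extending $D$, and checks whether a monotonous path of $f(D)$ either already has length $>s$ or can be prolonged inside the image of $H$ to exceed $s$. The implicit justification is that if $F(G)$ contains a monotonous path of length $s+1$, the $r$-disks producing its successive edges have overlapping vertex sets, so their centers are pairwise at distance at most $2r$ and hence all lie in a single disk of radius about $2rs$; thus it suffices to scan finitely many such disks. Your proof skips this chaining argument and instead bounds the total size of a counterexample directly: you take $H$ to be the union of the (at most $2s+3$) producing $r$-disks and check \emph{all} bounded-star graphs with at most $N=(2s+3)M$ vertices. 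This is more elementary and fully spelled out (in particular, your verification that $H^r_v=G^r_v$ for $v\in S$ and that bounded-starness passes to subgraphs makes explicit what the paper leaves to the reader), at the cost of a larger and less structured search space. A minor remark: the edge-centres alone already cover all vertices of the path, so $|S|\le s+1$ would suffice, giving a smaller $N$; your $2s+3$ is valid but not tight.
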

\VL{
\begin{proof}
We can, for each disk $D\in\mathcal{D}^{r}$ centered on a vertex $u$, consider a disk $H$ of radius $2rs$ centered on $u$ and containing $D$. Considering any $0$-alternating path $p$ of $f(D)$, the two following cases can appear:
\begin{itemize}
\item $p$ is strictly contained in $f(D)$ and it can be checked whether its length is greater than $s$,
\item $p$ can be extended as a $0$-alternating path in $f(H)$ by a length between $1$ and $s$. In that case, we can also check if its length is strictly less than $s+1$.
\end{itemize}
By checking this property for each disk $D$ of radius $r$ and each $H$ containing $D$, we can decide whether the image of a graph will contain $0$-alternating path of length greater than $s$.
\end{proof}}

\section*{Conclusion}

\noindent {\bf Summary.} We have obtained that the following definition captures Causal Complexes Dynamics, i.e. evolutions of discrete surfaces such that information does not propagate too fast and that act everywhere the same:
\begin{definition}[Causal Complexes Dynamics]
A function $F$ from ${\cal G}_{ \pi}$ to ${\cal G}_{ \pi}$ is a {\em Causal Complexes Dynamics}, or CCD, if and only if there exists $r$ a radius and $f$ a rotation-commuting, bounded-star preserving local rule from ${\cal D}^r_{ \pi}$ to ${\cal G}_{ \pi}$ such that for every graph $G$ in ${\cal G}_{\Sigma,r}$, $F(G)=\bigcup_{v\in G} f(G^r_v)$.
\end{definition}
We have also obtained that given a candidate local rule $f$, it is decidable whether it has the required properties. Since CCD are a specialization of CGD, several results follow as corollaries from \cite{ArrighiCGD,ArrighiIC,ArrighiCayley}. For instance, it follows that CCD of radius $1$ are universal, that CCD are composable, that CCD can be characterized as the set of continuous functions from discrete surfaces to discrete surfaces with respect to the 
Gromov-Hausdorff-Cantor metric upon isomorphism classes. These results deserve to be made more explicit, but they are already indicators of the generality of the model.

\noindent {\bf Further work.} We went constantly back and forth from graph to simplicial complexes, but we have not formalized this relationship. First: Can every such graph be mapped into a $2D$ oriented simplicial complex? On the one hand, it is intuitive that each vertex represents an oriented triangle, and each edge specifies a unique oriented gluing. On the other hand, we are able to represent a sphere, a cylinder, or a torus with just two vertices, whereas these need many triangles in the simplicial complex formalism. Hence the correspondence is to be found with more economical formalisms such as $\Delta$-complexes \cite{Hatcher}.
\begin{figure}\begin{center}
\includegraphics[scale=0.3]{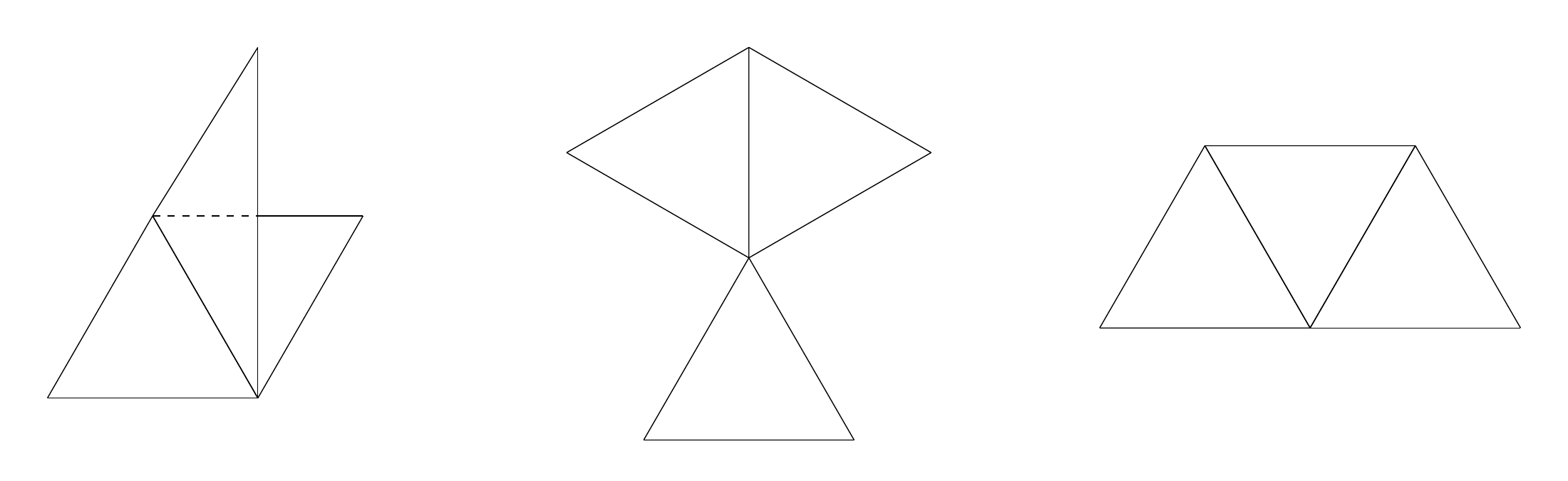}
\caption{Complexes, pseudomanifolds, combinatorial manifolds. \label{fig:pseudo}}
\end{center}
\end{figure}
Second: Can any $2D$ oriented simplicial complex be represented by such a graph? We are willingly limiting ourselves to those complexes that arise as discretizations of $2D$ manifolds, i.e. combinatorial manifolds with borders \cite{Lickorish}. In the $2D$ case these are just the complexes obtained by only gluing triangles pairwise and along their sides (See Fig. \ref{fig:pseudo}). In $n$-dimensions combinatorial manifolds are harder to characterize, however: the star of every point must be an $n$-ball. Our bounded-star restriction will then play a crucial role. 

\section*{Acknowledgements}
The authors would like to thank Christian Mercat for helping them enter the world of simplicial complexes. 

\bibliographystyle{eptcs}
\bibliography{biblio_doi}
\end{document}